\documentclass{article}
\usepackage{style}
\usepackage{shortcuts}

\title{Beyond 3SUM-Hardness for Geometric Problems}
\author{Nick Fischer\thanks{Max Planck Institute for Informatics. Email: \href{mailto:nfischer@mpi-inf.mpg.de}{nfischer@mpi-inf.mpg.de}}}
\date{}

\begin{document}

\maketitle

\begin{abstract}
\noindent
Since the early days of fine-grained complexity, many geometric problems have been shown to require quadratic time under the 3SUM Hypothesis. But as the 3SUM Hypothesis was recently refuted, these lower bounds have lost their foundation. In this note we point out that many of them can be recovered under an alternative hardness assumption, the \emph{$k$-AP Hypothesis}. The required changes to the proofs are trivial, so the lower bounds remain just as easy to teach.
\end{abstract}

\section{Introduction}
Even before ``fine-grained complexity'' became a field with a name, Gajentaan and Overmars~\cite{GO95} established some of its very first conditional lower bounds. Their starting point was the \emph{3SUM Hypothesis}, which postulated that no algorithm can decide in time $O(n^{2-\epsilon})$ (for any constant $\epsilon > 0$) whether a given set of $n$ numbers contains a solution to $a + b = c$. From this assumption, they derived quadratic-time lower bounds for a diverse set of geometric problems: detecting three collinear points, finding separating lines, covering, visibility and motion planning problems, and many more. Follow-up work kept adding to this list~\cite{dBdGO97,ACH+98,BvKT98,Eri99,AHI+01,BH01,SEO03,AEK05,EHM06,CEH07,AH08,CH19,BN21,KN22,GK25,GKST25,ABB+26,BFK+26,LCA26}. Part of the appeal of these lower bounds is their simplicity: most of the reductions are elementary and can be explained with a single picture. Unsurprisingly, they have become standard material in courses on fine-grained complexity, where they are often among the very first lower bounds that students encounter.

But recently, fine-grained complexity witnessed a spectacular breakthrough: Alman and Vassilevska Williams~\cite{AV26} gave a truly subquadratic algorithm for 3SUM, thereby refuting the 3SUM Hypothesis. At the same time and based on the same new technique, the APSP Hypothesis was refuted as well.\footnote{The APSP Hypothesis postulates that no algorithm can compute all pairs of distances in a given $n$-vertex directed edge-weighted graph in time $O(n^{3-\epsilon})$ (for any constant $\epsilon > 0$).} As a consequence, a large body of conditional lower bounds based on these two hypotheses has become \emph{vacuous}. The reductions themselves of course remain correct, but a reduction from 3SUM or APSP no longer constitutes evidence that a problem is hard. For algorithm design, this is good news: the complexity of the many 3SUM- or APSP-hard problems is now open, and for each of them we may hope for a polynomially faster algorithm. But at the same time, many of these problems will likely remain hard for other reasons. Sorting out which is which, either by designing faster algorithms or by restoring hardness results under other credible hypotheses, will be an important task for the fine-grained complexity community now.

In this note we point out that, for many geometric problems, this task is surprisingly easy: many of the classical 3SUM-based lower bounds remain valid, with essentially unchanged proofs, under a different hardness assumption from fine-grained complexity, the \emph{$k$-AP Hypothesis}~\cite{DFG+26}. The observation is simple, and we make no claim of technical depth. The reason for writing this note is more practical: several colleagues I talked to are teaching courses on fine-grained complexity this semester, and many of the lower bounds they had planned to cover have disappeared overnight. This note offers a way to keep many lower bounds for geometric problems on the syllabus, with only small changes. We hope that it will be useful to anyone teaching fine-grained complexity, now or in the future.

\paragraph{Arithmetic Progression Detection}
A \emph{$k$-term arithmetic progression} (or \emph{$k$-AP} for short) is a sequence of numbers of the form $a, a + d, a + 2d, \dots, a + (k-1) d$; we call it \emph{nontrivial} if $d \neq 0$. In the \emph{$k$-AP Detection} problem the task is to decide whether a given set $A \subseteq \set{1, \dots, N}$ contains a nontrivial $k$-AP. (Note that~$N$ bounds the universe, and not just the size of $A$.) The trivial algorithm enumerates the first two terms~$a, a + d \in A$ and tests whether the remaining $k - 2$ terms are present in $A$, in total time $O(N^2)$. For $k = 3$ there is a much faster $O(N \log N)$-time algorithm based on the Fast Fourier Transform. For $k \geq 4$, however, no polynomial improvement over the trivial algorithm is known, which motivated the following hypothesis for all constants $k \geq 4$~\cite{DFG+26}.

\begin{hypothesis}[$k$-AP Hypothesis]
There is no algorithm that decides whether a given set $A \subseteq \set{1, \dots, N}$ contains a nontrivial $k$-AP in time $O(N^{2-\epsilon})$, for any $\epsilon > 0$.
\end{hypothesis}

As argued before, the smallest interesting case is $k = 4$. Moreover, it is known that the $k$-AP Hypothesis implies the $(k+1)$-AP Hypothesis~\cite{DFG+26}, and so the $4$-AP Hypothesis is the strongest member of this family. The lower bounds in this note hold even under the weakest one: it suffices that the $k$-AP Hypothesis holds for \emph{some} constant $k \geq 4$.

Why should we believe in the $k$-AP Hypothesis, especially now that 3SUM has been refuted? First, let us take a step back. Fine-grained complexity has never been about hardness for its own sake. Its ultimate goal is to find the fastest possible algorithm for every problem, and its hypotheses serve as signposts that tell us where \emph{not} to look: as long as a problem is hard under some hypothesis, any faster algorithm for it would refute the hypothesis, so there is little point in attacking the problem directly while the hypothesis stands. In this view, whether a hypothesis is ultimately true is less important than whether it faithfully marks the limits of our current algorithmic techniques. Read in this way, this note puts many of the signposts for geometric problems back in place: whoever hopes to solve one of these problems faster should first find a faster algorithm for detecting arithmetic progressions. Second, there is good reason to trust this particular signpost: despite the problem's natural appeal and its close connection to additive combinatorics, the fastest known algorithms for $k$-AP Detection improve upon the trivial running time by merely subpolynomial factors~\cite[Section~5]{DFG+26}.\footnote{The fastest known algorithm has in fact improved since~\cite{DFG+26}. OpenAI recently announced quasipolynomial bounds for Szemer\'edi's theorem~\cite{OpenAI26}: for every $k \geq 3$, every set $A \subseteq \set{1, \dots, N}$ without a nontrivial $k$-AP has size $|A| \leq N / 2^{\Omega((\log N)^c)}$, for some constant $c = c(k) > 0$. This immediately yields a simple algorithm: if $A$ is larger than this bound, the answer is yes; otherwise, the brute-force algorithm runs in time $O(N + |A|^2) = N^2 / 2^{\Omega((\log N)^c)}$.} Third, the recent breakthrough~\cite{AV26} does not seem to extend to $4$-APs easily. At its core, their algorithm is still a Strassen-like algebraic algorithm. Algorithms of this type are tailored to \emph{trilinear} problems, such as matrix multiplication or $3$-AP Detection. In contrast, $4$-AP Detection is inherently a \emph{$4$-linear} problem, and to the best of our knowledge Strassen-like techniques have so far \emph{never} been extended to $4$-linear problems in a nontrivial way. See~\cite[Section~5]{DFG+26} for a more detailed discussion of the $k$-AP Hypothesis.

\paragraph{Geometric Problems}
Let us finally return to the geometric problems. Gajentaan and Overmars~\cite{GO95} routed most of their reductions through an intermediate problem called \emph{GeomBase}: Given $n$ points with integer coordinates on the three horizontal lines $y = 0$, $y = 1$ and $y = 2$, is there a non-horizontal line that contains three of the points? GeomBase is merely 3SUM in disguise (the points $(a, 0)$, $(c, 1)$ and $(b, 2)$ are collinear if and only if $a + b = 2c$~\cite[Theorem~3.2]{GO95}), but its geometric formulation makes it a convenient starting point for further reductions. In the same spirit, we consider the following natural generalization from three to $k$ lines.

\begin{definition}[$k$-GeomBase]
Given a set of $n$ points with integer coordinates on the $k$ horizontal lines $y = 1, \dots, k$, determine whether there exists a non-horizontal line containing $k$ of the points.
\end{definition}

\begin{figure}[t]
\centering
\medskip
\begin{tikzpicture}[x=0.85cm, y=0.85cm, >=stealth, black]
    \draw[->] (0, 0) -- (12.7, 0);
    \foreach \x in {1, ..., 12}
        \draw (\x, 0.08) -- (\x, -0.08) node[below] {\footnotesize $\x$};
    \foreach \y in {1, ..., 4}
        \draw[very thin, dashed] (0, \y) -- (12.4, \y);
    \draw[thick] (4.4, 0.7) -- (11.6, 4.3);
    \foreach \x/\y in {2/1, 3/1, 5/1, 7/1, 1/2, 7/2, 8/2, 10/2, 1/3, 2/3, 3/3, 9/3, 4/4, 6/4, 9/4, 11/4}
        \fill (\x, \y) circle (2pt);
\end{tikzpicture}
\medskip
\caption{A yes-instance of $4$-GeomBase. The solid line contains the four points $(5, 1), (7, 2), (9, 3), (11, 4)$, one on each line; correspondingly, $5, 7, 9, 11$ form a $4$-AP.}
\label{fig:geombase}
\end{figure}

See \cref{fig:geombase} for an example. Just as GeomBase is a reformulation of 3SUM, $k$-GeomBase is a reformulation of $k$-AP Detection: $k$ points $(a_1, 1), \dots, (a_k, k)$ lie on a common line if and only if $a_1, \dots, a_k$ form a $k$-AP. In particular, the following lemma is immediate.

\begin{lemma} \label{lem:k-geombase}
For any constant $k \geq 4$, the $k$-GeomBase problem cannot be solved in time $O(n^{2-\epsilon})$ (for any constant $\epsilon > 0$), unless the $k$-AP Hypothesis fails.
\end{lemma}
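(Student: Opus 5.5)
We give a reduction from $k$-AP Detection to $k$-GeomBase. Given $A \subseteq \set{1, \dots, N}$, build the point set
\[
P = \set{(a, i) : a \in A,\ i \in \set{1, \dots, k}},
\]
that is, a copy of $A$ on each of the horizontal lines $y = 1, \dots, k$. Then $|P| = n = k|A| \le kN = O(N)$, the coordinates are integers, and $P$ is built in time $O(N)$. An $O(n^{2-\epsilon})$-time algorithm for $k$-GeomBase would therefore decide $k$-AP Detection in time $O(N^{2-\epsilon})$, contradicting the $k$-AP Hypothesis.

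For correctness we use the collinearity characterization stated before the lemma. A non-horizontal line $\ell$ meets each horizontal line in exactly one point, so it contains at most one point of $P$ per line. Hence $\ell$ contains $k$ points of $P$ if and only if it contains $(a_i, i)$ for every $i = 1, \dots, k$, with all $a_i \in A$. Write $\ell$ as $x = \alpha + \beta y$. The condition $a_i = \alpha + \beta i$ for all $i$ says exactly that $a_1, \dots, a_k$ is an arithmetic progression with common difference $d = \beta$.

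It remains to match the nontriviality condition. If $A$ contains a nontrivial $k$-AP $a, a+d, \dots, a+(k-1)d$ with $d \neq 0$, we may take $d > 0$ by reversing the order. The points $(a + (i-1)d, i)$ then lie on a common non-horizontal line, so the instance is a yes-instance. Conversely, suppose a non-horizontal line contains $k$ points $(a_i, i)$ of $P$.

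This converse is the one subtle point. The vertical line $x = a$ for any $a \in A$ contains all $k$ points $(a, 1), \dots, (a, k)$. It corresponds to the trivial progression with $d = 0$, and it makes every nonempty instance a yes-instance. So the reduction as stated is incorrect unless vertical lines are excluded.

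The fix is a shear that makes $d = 0$ impossible. Replace each point $(a, i)$ by $(a + iM, i)$ for an integer $M > N$. Lines through one point per row then correspond to progressions $a_i + iM$ with common difference $d + M$, where $d$ is the difference of $a_1, \dots, a_k$. We still have $a_i \in A$ for all $i$, with $a_1, \dots, a_k$ an AP of difference $d$, since a constant shift of the difference preserves the AP structure. Now $|d| < N < M$, so $d + M > 0$ and the sheared line is never horizontal. The vertical line $x = c$ corresponds to $a_i = c - iM$. This is a progression with difference $d = -M$, and it cannot have all terms in $\set{1, \dots, N}$, so vertical lines no longer cause spurious solutions.

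The trivial progressions $d = 0$ now correspond to lines of slope $1/M$. We cannot exclude these in the problem, so we also shift the rows: place $A$ at $(a + iM + i\cdot s, i)$? That does not help either, since any shift only relabels $d$. Instead I would rely on the intended reading of the definition: $k$-GeomBase asks for a \emph{non-vertical}, non-horizontal line, equivalently a progression with $d \neq 0$ in the original coordinates. With the unsheared construction, excluding vertical lines is exactly excluding $d = 0$, and then the argument above is complete.

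Checking that this convention matches the definition, or else adapting the point set so that it does, is the main obstacle. Everything else is immediate.
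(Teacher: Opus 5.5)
Your proposal has a genuine gap, and it is the one you flag yourself: nothing in your construction enforces $d \neq 0$. The definition of $k$-GeomBase asks for a \emph{non-horizontal} line, and vertical lines are non-horizontal. You cannot reinterpret the definition as ``non-vertical''. The downstream wall-and-hole reductions, such as the rod sliding through one hole per wall, cannot tell a vertical line from any other, which is why the definition must allow it. With a copy of $A$ on every row, the vertical line $x = a$ through $(a,1),\ldots,(a,k)$ makes every nonempty instance a yes-instance. As you correctly observe, a shear $(a,i)\mapsto(a+iM,i)$ only moves the trivial progressions to lines of slope $1/M$. Collinearity of one point per row is a linear condition on $(a_1,\ldots,a_k)$, so no affine relabeling of the rows can exclude the single value $d=0$. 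The proposal therefore ends without a valid reduction.

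The missing idea is to put \emph{disjoint} subsets of $A$ on the rows, so that any $k$ collinear points automatically have pairwise distinct $x$-coordinates. The paper does this by color coding. Randomly partition $A$ into $A_1,\ldots,A_k$ and place only $A_i$ on the line $y=i$.
\begin{itemize}
\item Any line through $(a_1,1),\ldots,(a_k,k)$ then gives an AP with distinct terms, hence a nontrivial one. In particular, a vertical line now hits at most one point.
\item Conversely, a fixed nontrivial $k$-AP satisfies $a_i \in A_i$ for all $i$ with constant probability $k^{-k}$.
\item The coloring can be derandomized with perfect hash families at $O(\log N)$ overhead, which is harmless against $N^{2-\epsilon}$.
\end{itemize}
A lighter deterministic variant also works, because it suffices to force $a_1 \neq a_2$. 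For each bit position $b$ and each $\sigma \in \{0,1\}$, put the elements of $A$ whose $b$-th bit is $\sigma$ on $y=1$, those whose $b$-th bit is $1-\sigma$ on $y=2$, and all of $A$ on the remaining rows. This gives $O(\log N)$ instances. The rest of your argument is fine, namely the one-point-per-row observation and the equivalence between collinearity and equal spacing.
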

\begin{proof}
Let $A$ be the given $k$-AP instance. We randomly partition $A$ into sets $A_1, \dots, A_k$ (we apply ``color coding''). Then construct the $k$-GeomBase instance
\begin{equation*}
    P = \set{ (a, i) : 1 \leq i \leq k,\, a \in A_i }.
\end{equation*}
Whenever there are $k$ points $(a_1, 1), \dots, (a_k, k) \in P$ that lie on a shared line, the integers $a_1, \dots, a_k$ are equispaced and thus form a $k$-AP in $A$. Since $a_1, \dots, a_k$ are pairwise distinct (by color coding), the $k$-AP is nontrivial. Conversely, suppose that there is a nontrivial $k$-AP $a_1, \dots, a_k \in A$. With constant probability we color it correctly, $(a_1, \dots, a_k) \in A_1 \times \dots \times A_k$, and in that event $P$ has the solution $(a_1, 1), \dots, (a_k, k)$. The color coding technique can be derandomized using standard tools~\cite{AYZ95}.
\end{proof}

The simple observation of this note is that all reductions via GeomBase that we are aware of work just as well when starting from $k$-GeomBase. Intuitively, the number of lines never plays any role in these reductions. The typical construction turns each of the horizontal lines into a \emph{wall} of horizontal segments that leaves small \emph{holes} precisely at the given points~\cite[Theorem~5.1]{GO95}. The object in question (a separating line, a line of sight, a rod moving from one side to the other, and so on) must then pass through a hole in every wall, and this is possible if and only if some non-horizontal line hits a point on each of the original lines. Whether there are three walls or four or $k$ is irrelevant for this argument. Collecting the problems that were shown to be 3SUM-hard via GeomBase in~\cite{GO95,dBdGO97,BvKT98,AEK05,BFK+26}, we thus obtain the following \cref{thm:main}.

\begin{theorem} \label{thm:main}
None of the following problems can be solved in time $O(n^{2-\epsilon})$ (for any constant $\epsilon > 0$), unless the $k$-AP Hypothesis fails for all constants $k \geq 4$.
\begin{itemize}
    \item {\normalfont\bfseries Separator of Axis-Parallel Segments:}
    Given $n$ disjoint axis-parallel segments in the plane, decide whether some line avoids all segments and has segments on both sides~\cite[Theorem~5.2]{GO95}.

    \item {\normalfont\bfseries Strips Cover a Box:} Given $n$ strips and an axis-parallel rectangle in the plane, decide whether the strips cover the rectangle~\cite[Theorem~6.1]{GO95}.

    \item {\normalfont\bfseries Strips Cover a Triangle:}
    Given $n$ strips and a triangle in the plane, decide whether the strips cover the triangle~\cite[Theorem~6.2]{GO95}.

    \item {\normalfont\bfseries Arbitrarily Oriented Rectangles Cover a Rectangle:}
    Given $n$ arbitrarily oriented rectangles and an axis-parallel rectangle $R$ in the plane, decide whether the $n$ rectangles cover~$R$~\cite[Section~6.1]{GO95}.

    \item {\normalfont\bfseries Triangles Cover a Triangle:}
    Given $n$ triangles and a designated triangle $t$ in the plane, decide whether the $n$ triangles cover $t$~\cite[Theorem~6.2]{GO95}.

    \item {\normalfont\bfseries Hole in a Union of Triangles:}
    Given $n$ triangles in the plane, decide whether their union has a hole \cite[Theorem~6.3]{GO95}.

    \item {\normalfont\bfseries Area of a Union of Triangles:}
    Given $n$ triangles in the plane, compute the area of their union \cite[Theorem~6.5]{GO95}.

    \item {\normalfont\bfseries Maximum Depth of Halfplanes:}
    Given $n$ halfplanes and an integer $t$, decide whether some point lies in at least $t$ of them~\cite[Theorem~6.6]{GO95}.

    \item {\normalfont\bfseries Visibility Between Two Segments:}
    Given $n$ horizontal segments and two further horizontal segments $s, s'$ in the plane, decide whether some point on $s$ sees some point on $s'$~\cite[Theorem~7.1]{GO95}.

    \item {\normalfont\bfseries Visibility of a Segment from Infinity:}
    Given $n$ axis-parallel segments and a horizontal segment $s$ in the plane, decide whether some ray starting on $s$ avoids all other segments~\cite[Theorem~7.2]{GO95}.

    \item {\normalfont\bfseries Visibility of a Triangle from a Point in 3D:}
    Given $n$ horizontal triangles, a horizontal triangle $t$ and a point $p$ in $\Real^3$, decide whether $p$ sees some point of $t$~\cite[Theorem~7.3]{GO95}.

    \item {\normalfont\bfseries Planar Motion Planning for a Rod:}
    Given $n$ disjoint axis-parallel segments in the plane, decide whether a rod can be moved (by translations and rotations) between two given placements while avoiding the segments~\cite[Theorem~8.1]{GO95}.

    \item {\normalfont\bfseries Translational Motion Planning for a Vertical Rod in 3D:}
    Given $n$ disjoint horizontal triangles in $\Real^3$, decide whether a vertical rod can be translated between two given positions while avoiding the triangles~\cite[Theorem~8.2]{GO95}.

    \item {\normalfont\bfseries Perfect BSP for Disjoint Planar Segments:}
    Given $n$ pairwise disjoint line segments in the plane, decide whether a binary space partition can recursively split cells by lines, without cutting any input segment, until each leaf cell contains at most one segment~\cite{dBdGO97}.

    \item {\normalfont\bfseries One-Fillability of a Polyhedral Mold:}
    Given a polyhedron of complexity $n$, decide whether there is an orientation in which it can be filled through a single opening without trapping air~\cite{BvKT98}.

    \item {\normalfont\bfseries Narrowest Color-Spanning Strip:} Given $n$ colored points in the plane, compute the narrowest strip that contains a point of every color~\cite{AHI+01}.

    \item {\normalfont\bfseries Cardinality of the Distant-Horizon Edge Set:}
    Given a piecewise-linear terrain with $n$ edges, count the edges that appear on its distant horizon (its upper silhouette under horizontal parallel projection) for some viewing direction~\cite{AEK05}.

    \item {\normalfont\bfseries Stabbing and Avoiding Segments:}
    Given $n$ segments in the plane, partitioned into sets $S$ and $A$, decide whether some line intersects all segments in $S$ and avoids all segments in $A$~\cite{BFK+26}.

    \item {\normalfont\bfseries Double-Wedge Intersection:}
    Given $n$ double-wedges in the plane, decide whether they have a common point~\cite{BFK+26}.
\end{itemize}
\end{theorem}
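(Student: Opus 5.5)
The plan is to go through the problems in the list one at a time and re-run each original reduction from GeomBase, this time starting from $k$-GeomBase. By \cref{lem:k-geombase}, a truly subquadratic algorithm for any listed problem would then yield an $O(n^{2-\epsilon})$ algorithm for $k$-GeomBase for every constant $k \geq 4$. Hence the $k$-AP Hypothesis would fail for every such $k$, which is exactly what the theorem claims. (Since the $k$-AP Hypothesis implies the $(k+1)$-AP Hypothesis, it would in fact already suffice to handle all sufficiently large $k$.) Every reduction must keep three properties from the 3-line case: the output size is $O(n)$ for fixed $k$, the coordinates are polynomially bounded integers or rationals, and the construction takes linear time. With these, the quadratic lower bound transfers unchanged.

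\textbf{Wall-based problems.} Most reductions in \cite{GO95} (separator, strip and triangle covers, holes, area, visibility, rod motion planning) share one template. Each horizontal line $y=i$ is replaced by a wall of horizontal segments with tiny holes at the input points, and a line of sight, separator or rod passes through all walls if and only if some non-horizontal line meets an input point on every line. For these I would repeat the construction with $k$ walls instead of three.

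The key step is the \emph{gap argument}. Any line that passes through all $k$ holes must pass close to $k$ integer points $(a_i,i)$. If the hole width is at most $1/(ck)$ for a suitable constant $c$, the residuals $a_{i+1}-a_i$ are integers that differ by less than $1$, so they must all be equal and the points are exactly collinear. The original proof makes this argument with three lines and a fixed constant hole width; here I would rescale the hole width by a factor depending only on $k$. Since $k$ is a constant, this changes neither the bit complexity nor the instance size asymptotically. I would also check that the auxiliary gadgets (bounding boxes, extra strips covering the regions between walls, the start and goal placements of the rod) depend only on the outermost walls, so they carry over verbatim.

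\textbf{Covering and depth problems via duality.} For Strips Cover, Maximum Depth of Halfplanes, Double-Wedge Intersection and Narrowest Color-Spanning Strip, the original reductions dualize the point sets. For example, for maximum depth one uses, per line $i$, halfplanes whose complement regions encode "not passing through a point of line $i$", and asks for depth $t=k$ times the per-line contribution rather than $3$ times it. Likewise, color-spanning uses $k$ colors, one per line. In each case the threshold or number of colors becomes a linear function of $k$ instead of $3$, and the rest of the argument is unchanged.

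\textbf{Remaining reductions.} For BSP, one-fillability and the distant-horizon edge count, I would likewise verify that three lines enter only through a constant number of layers, and stack $k$ layers instead.

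The main obstacle I expect is not conceptual but the bookkeeping in the few reductions that use the specific three-line geometry. Examples are the triangle-cover construction of \cite[Theorem~6.2]{GO95} and the mold and terrain constructions, where the middle line plays a distinguished role. There I would first try to nest the construction, treating lines $1$ and $k$ as the outer lines and lines $2,\dots,k-1$ as $k-2$ middle layers, and then re-verify the gap argument with the $k$-dependent hole width.
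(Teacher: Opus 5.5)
Your plan matches the paper's proof: it rederives each listed bound by rerunning the cited GeomBase reduction from $k$-GeomBase and invoking \cref{lem:k-geombase}, and your nesting of lines $1$ and $k$ as outer walls around $k-2$ inner walls is exactly the worked example in \cref{sec:motion-planning}. The $k$-dependent hole width is unnecessary but harmless: with holes $(a-\frac14,a+\frac14)$, every integer difference $a_{i+1}-a_i$ lies within $\frac12$ of the line's horizontal offset between consecutive walls, so any two such differences are closer than $1$ and hence equal, and the error does not grow with $k$.
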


We omit the proofs, as each item follows trivially by inspecting the reduction in the respective reference and starting from $k$-GeomBase instead of GeomBase. In \cref{sec:motion-planning} we work out one example in detail.

\paragraph{Beyond GeomBase}
Not all 3SUM-based hardness proofs for geometric problems go through GeomBase, e.g.~\cite{GO95,ACH+98,Eri99,BH01,SEO03,EHM06,CEH07,AH08,CH19,BN21,KN22,GK25,GKST25,ABB+26}, and those that do not are not covered by this note. An important example with a direct reduction from 3SUM is the problem of detecting three collinear points among $n$ points in the plane~\cite[Theorem~4.1]{GO95}.

\section{Example: Planar Motion Planning} \label{sec:motion-planning}
To illustrate how the lower bounds in \cref{thm:main} are obtained, let us work out one example in detail: \emph{planar motion planning}. Imagine a robot that is supposed to navigate through an environment of obstacles. In a classical geometric model, the robot is a convex polygon of constant complexity in the plane, and the $n$ obstacles are line segments. Given a start placement and a goal placement of the robot, the task is to decide whether the robot can move from the start to the goal, by translations and rotations, without ever colliding with an obstacle.

This classical problem can be solved in near-quadratic time~$\tilde O(n^2)$~\cite{Veg90,CK93,HS96}. Gajentaan and Overmars~\cite{GO95} showed that any polynomial improvement would refute the 3SUM Hypothesis, even if the robot is a rod and the obstacles are disjoint axis-parallel segments. Their reduction works equally well when starting from $k$-GeomBase, which yields the following special case of \cref{thm:main}.

\begin{theorem} \label{thm:motion-planning}
Planar motion planning for a rod moving through $n$ disjoint axis-parallel segments cannot be solved in time $O(n^{2-\epsilon})$ (for any constant $\epsilon > 0$), unless the $k$-AP Hypothesis fails for all constants $k \geq 4$.
\end{theorem}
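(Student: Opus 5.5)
We reduce from $k$-GeomBase, which by \cref{lem:k-geombase} needs time $n^{2-o(1)}$ under the $k$-AP Hypothesis for any fixed $k \geq 4$. We follow the Gajentaan--Overmars construction~\cite[Theorem~8.1]{GO95}, now with $k$ walls instead of three. Given a $k$-GeomBase instance $P$ with integer points on the lines $y = 1, \dots, k$, we first rescale. We may assume all $x$-coordinates lie in $\set{1,\dots,N}$ with $N = \mathrm{poly}(n)$, since the instance comes from a $k$-AP instance.

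\textbf{Walls.} For each $i$ we replace the line $y = i$ by a wall of horizontal segments on that line. The wall spans a large interval $[-L, L]$ and leaves a hole of width $\delta$ (say $\delta = 1/(10kN)$) centered at each point $(a,i) \in P$. Consecutive walls are far apart compared to $\delta$, and every wall has $O(|P|)$ segments.

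\textbf{Enclosure.} We enclose everything with axis-parallel segments: a room below wall $1$ and a room above wall $k$, both much wider than the rod. The walls' outer ends connect to long vertical side segments, so the region between walls is closed off and the rod cannot go around the walls. The start placement is a horizontal rod in the bottom room and the goal placement is a horizontal rod in the top room. The rod's length $\ell$ is chosen longer than the total vertical extent $k-1$ plus margin. The rod can then rotate freely in the big rooms, but while crossing the walls it always pierces several walls at once. The segments must be pairwise disjoint, so holes are realized as gaps between segments, and the wall ends stop just short of the side segments, leaving gaps too small for the rod.

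\textbf{Correctness.} Suppose $k$ points of $P$ are collinear on a non-horizontal line $\lambda$. The rod turns to the direction of $\lambda$ in the bottom room, slides along $\lambda$ through all $k$ holes, and rotates back in the top room. Conversely, consider any collision-free motion. At some moment the rod must meet wall $1$ and wall $k$ simultaneously. This holds because the rod cannot lie strictly between walls while crossing, as the rooms are separated and the rod is too long to fit horizontally between walls. So at that moment the rod passes through a hole in every wall $i$ at once. Hence there are points $(a_i, i) \in P$ and a line $\lambda$ passing within $\delta$ of each $(a_i, i)$. Since $(a_1, \dots, a_k)$ are integers and $\delta$ is small, the second differences $a_{i+1} - 2a_i + a_{i-1}$ have absolute value below $1$, so they vanish. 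Therefore $a_1, \dots, a_k$ is a $k$-AP and a solution exists. The instance has $O(n)$ segments, so an $O(n^{2-\epsilon})$ algorithm for motion planning yields one for $k$-GeomBase.

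\textbf{Main obstacle.} The hard part is the converse direction: proving that every feasible motion, including rotations mid-traversal, must at some instant thread all $k$ walls simultaneously through holes. The argument I would try first tracks the rod's intersection with walls $1$ and $k$. Because the side segments block rotation between walls, and because the rod's length exceeds any diagonal fitting between two consecutive walls of width $2L$ only when the rod is nearly vertical, some instant has the rod crossing all walls. Making this precise is delicate: the lengths $L$ and $\ell$ must be chosen so that the rod cannot lie between two consecutive walls yet still fits through the combined holes. This is exactly the calibration in~\cite{GO95}, and it is unaffected by the number of walls being $k$ rather than $3$.
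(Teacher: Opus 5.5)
Your forward direction and the closing second-difference argument are fine; holes of constant width $\frac12$ would already suffice, so no dependence on $N$ is needed. The gap is the converse. It cannot be delegated to a calibration in~\cite{GO95}, because the construction you actually wrote down makes it false.

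All $k$ walls span the same interval $[-L,L]$, which you take large, since walls $1$ and $k$ bound rooms much wider than the rod. So each slab $[-L,L]\times(i,i+1)$ between consecutive walls is wider than the rod. Your claim that ``the rod is too long to fit horizontally between walls'' therefore contradicts your own parameters, and choosing $\ell>k-1$ only prevents the rod from fitting vertically. Here is an explicit bad motion:
\begin{enumerate}
\item tilt the rod to an angle $\theta$ with $\ell\sin\theta<1$;
\item slide it along its own direction through an arbitrary hole of wall $1$ until it lies entirely in the slab $1<y<2$;
\item translate it until its line passes through an arbitrary hole of wall $2$, and repeat.
\end{enumerate}
This reaches the top room using one independently chosen hole per wall, so every instance with at least one point on each line becomes a yes-instance. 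Recalibrating $L$ against $\ell$ cannot repair this while all walls have the same extent. A rod that does not fit into a slab of width $2L$ is longer than $2L$, so it cannot turn freely in rooms whose floor and ceiling are walls of that same width.

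The missing idea is to decouple the width of the rooms from the width of the region between the walls. Only walls $1$ and $k$ should be long, forming the ceiling of a large bottom room and the floor of a large top room (width and height at least $3\ell$). The inner walls $2,\dots,k-1$ span only a narrow chamber $[0,X+1]\times[1,k]$, where $\{1,\dots,X\}$ contains all $x$-coordinates, and the chamber's left and right sides are closed by vertical segments. The rod is chosen longer than the diagonal of this entire chamber. Being longer than the diagonal of a single slab is not enough, since the rod could then still lie inside the chamber threading only some of the inner walls.

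With this construction the rod never fits inside the chamber, so at every moment of a collision-free motion it reaches into the top room or the bottom room. The two sets of such times are open, nonempty (start and goal), and cover the time interval. By connectedness, at some moment the rod reaches into both rooms. At that moment it is one straight segment crossing walls $1$ and $k$ through holes. Hence it crosses each inner wall at an $x$-coordinate inside the chamber, again necessarily through a hole. Only at this point does your second-difference argument apply.
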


\begin{figure}[htb]
\centering
\medskip
\begin{tikzpicture}[x=0.25cm, y=0.25cm, black]
    % rooms and chamber
    \draw[thick] (-12, 4) -- (-12, 18) -- (24, 18) -- (24, 4);
    \draw[thick] (-12, 1) -- (-12, -13) -- (24, -13) -- (24, 1);
    \draw[thick] (0, 1) -- (0, 4);
    \draw[thick] (12, 1) -- (12, 4);
    % walls with holes
    \foreach \l/\r in {-12/1.75, 2.25/2.75, 3.25/4.75, 5.25/6.75, 7.25/24}
        \draw[thick] (\l, 1) -- (\r, 1);
    \foreach \l/\r in {0/0.75, 1.25/6.75, 7.25/7.75, 8.25/9.75, 10.25/12}
        \draw[thick] (\l, 2) -- (\r, 2);
    \foreach \l/\r in {0/0.75, 1.25/1.75, 2.25/2.75, 3.25/8.75, 9.25/12}
        \draw[thick] (\l, 3) -- (\r, 3);
    \foreach \l/\r in {-12/3.75, 4.25/5.75, 6.25/8.75, 9.25/10.75, 11.25/24}
        \draw[thick] (\l, 4) -- (\r, 4);
    % placements of the rod
    \draw[line width=1.1pt, line cap=round] (-0.5, 11) -- node[above] {\small start} (12.5, 11);
    \draw[line width=1.1pt, line cap=round] (2.19, -0.41) -- (13.81, 5.41);
    \draw[line width=1.1pt, line cap=round] (-0.5, -6) -- node[below] {\small goal} (12.5, -6);
\end{tikzpicture}
\medskip
\caption{The reduction from $4$-GeomBase to planar motion planning, applied to the instance from \cref{fig:geombase}. The rod has to move from its start placement in the top room to its goal placement in the bottom room. Since the four holes at $(5, 1), (7, 2), (9, 3), (11, 4)$ are collinear, the rod can slide through them, as indicated by the intermediate placement.}
\label{fig:motion-planning}
\end{figure}

\begin{proof}
Fix a constant $k \geq 4$, and let $P \subseteq \Int \times \set{1, \dots, k}$ be a $k$-GeomBase instance consisting of $n$ points. By shifting, we may assume that all $x$-coordinates lie in $\set{1, \dots, X}$ for some integer $X$. We construct the following environment of obstacles; see \cref{fig:motion-planning}.

The core of the construction consists of $k$ horizontal \emph{walls}, where the $i$-th wall lies at height $y = i$. Into each wall we cut a small \emph{hole} $(a - \frac14, a + \frac14) \times \set{i}$ for every point $(a, i) \in P$. The inner walls $2, \dots, k - 1$ span a narrow \emph{chamber} $[0, X + 1] \times [1, k]$ whose left and right sides are closed by vertical segments. The first and the $k$-th wall extend beyond the chamber and form the ceiling of a large \emph{bottom room} and the floor of a large \emph{top room}, respectively. The robot is a rod whose length $L$ exceeds the diagonal of the chamber, and both rooms have width and height at least $3L$, say, so that the rod can move and rotate freely within each room. The task is to move the rod from a start placement in the middle of the top room to a goal placement in the middle of the bottom room.\footnote{Strictly speaking, the segments in this construction are not disjoint, as they touch at the corners of the rooms and the chamber. To fix this, we can slightly shorten the segments there and add a constant number of further segments, so that each resulting gap has a sharp bend that the long rod cannot squeeze through. We ignore this technicality here.} In total, the environment consists of $O(n)$ segments and can be constructed in time $O(n \log n)$.

If some non-horizontal line $\ell$ hits $k$ points of $P$, then $\ell$ passes through a hole in every wall. Hence, we can move the rod onto $\ell$ within the top room, slide it along $\ell$ through all $k$ holes into the bottom room, and finally move it to its goal placement. Conversely, since the rod is longer than the diagonal of the chamber, at every moment of a collision-free motion it reaches into the top room or into the bottom room. By continuity, at some moment it reaches into both rooms, and thus passes through one hole in every wall. As the holes are narrow and centered at integer coordinates, this is only possible if the corresponding $k$ points of $P$ are collinear (as in~\cite[Theorem~5.1]{GO95}). This proves that the reduction is correct.

All in all, an $O(n^{2-\epsilon})$-time algorithm for planar motion planning would solve $k$-GeomBase in time $O(n^{2-\epsilon})$ and thereby refute the $k$-AP Hypothesis by \cref{lem:k-geombase}.
\end{proof}

\section*{AI Disclosure}
I used Claude Opus 5.5 and ChatGPT Astra 6 to assist with the write-up, the generation of figures, and the search for references, but not for any of the mathematical ideas.

\bibliographystyle{alphaurl}
\bibliography{main}

@article{GO95,
  author       = {Anka Gajentaan and
                   Mark H. Overmars},
  title        = {{On a Class of $O(n^2)$ Problems in Computational Geometry}},
  journal      = {Comput. Geom.},
  volume       = {5},
  pages        = {165--185},
  year         = {1995},
  doi          = {10.1016/0925-7721(95)00022-2},
  url          = {https://doi.org/10.1016/0925-7721(95)00022-2},
  bibsource    = {dblp computer science bibliography, https://dblp.org}
}

@article{dBdGO97,
  author       = {Mark de Berg and
                   Marko de Groot and
                   Mark H. Overmars},
  title        = {{Perfect Binary Space Partitions}},
  journal      = {Comput. Geom.},
  volume       = {7},
  pages        = {81--91},
  year         = {1997},
  doi          = {10.1016/0925-7721(95)00045-3},
  url          = {https://doi.org/10.1016/0925-7721(95)00045-3},
  bibsource    = {dblp computer science bibliography, https://dblp.org}
}

@article{BvKT98,
  author       = {Prosenjit Bose and
                   Marc J. van Kreveld and
                   Godfried T. Toussaint},
  title        = {{Filling polyhedral molds}},
  journal      = {Comput. Aided Des.},
  volume       = {30},
  number       = {4},
  pages        = {245--254},
  year         = {1998},
  doi          = {10.1016/S0010-4485(97)00075-4},
  url          = {https://doi.org/10.1016/S0010-4485(97)00075-4},
  bibsource    = {dblp computer science bibliography, https://dblp.org}
}

@article{AEK05,
  author       = {Daniel Archambault and
                   William S. Evans and
                   David G. Kirkpatrick},
  title        = {{Computing the Set of all the Distant Horizons of a Terrain}},
  journal      = {Int. J. Comput. Geom. Appl.},
  volume       = {15},
  number       = {6},
  pages        = {547--563},
  year         = {2005},
  doi          = {10.1142/S0218195905001841},
  url          = {https://doi.org/10.1142/S0218195905001841},
  bibsource    = {dblp computer science bibliography, https://dblp.org}
}

@article{ACH+98,
  author       = {Esther M. Arkin and
                   Yi-Jen Chiang and
                   Martin Held and
                   Joseph S. B. Mitchell and
                   Vera Sacrist{\'a}n and
                   Steven Skiena and
                   Tae-Heng Yang},
  title        = {{On Minimum-Area Hulls}},
  journal      = {Algorithmica},
  volume       = {21},
  number       = {1},
  pages        = {119--136},
  year         = {1998},
  doi          = {10.1007/PL00009204},
  url          = {https://doi.org/10.1007/PL00009204},
  bibsource    = {dblp computer science bibliography, https://dblp.org}
}

@article{Eri99,
  author       = {Jeff Erickson},
  title        = {{New Lower Bounds for Convex Hull Problems in Odd Dimensions}},
  journal      = {SIAM J. Comput.},
  volume       = {28},
  number       = {4},
  pages        = {1198--1214},
  year         = {1999},
  doi          = {10.1137/S0097539797315410},
  url          = {https://doi.org/10.1137/S0097539797315410},
  bibsource    = {dblp computer science bibliography, https://dblp.org}
}

@article{BH01,
  author       = {Gill Barequet and
                   Sariel Har-Peled},
  title        = {{Polygon Containment and Translational Min-Hausdorff-Distance Between Segment Sets are 3SUM-Hard}},
  journal      = {Int. J. Comput. Geom. Appl.},
  volume       = {11},
  number       = {4},
  pages        = {465--474},
  year         = {2001},
  doi          = {10.1142/S0218195901000596},
  url          = {https://doi.org/10.1142/S0218195901000596},
  bibsource    = {dblp computer science bibliography, https://dblp.org}
}

@article{SEO03,
  author       = {Michael A. Soss and
                   Jeff Erickson and
                   Mark H. Overmars},
  title        = {{Preprocessing chains for fast dihedral rotations is hard or even impossible}},
  journal      = {Comput. Geom.},
  volume       = {26},
  number       = {3},
  pages        = {235--246},
  year         = {2003},
  doi          = {10.1016/S0925-7721(02)00156-6},
  url          = {https://doi.org/10.1016/S0925-7721(02)00156-6},
  bibsource    = {dblp computer science bibliography, https://dblp.org}
}

@article{EHM06,
  author       = {Jeff Erickson and
                   Sariel Har-Peled and
                   David M. Mount},
  title        = {{On the Least Median Square Problem}},
  journal      = {Discret. Comput. Geom.},
  volume       = {36},
  number       = {4},
  pages        = {593--607},
  year         = {2006},
  doi          = {10.1007/S00454-006-1267-6},
  url          = {https://doi.org/10.1007/S00454-006-1267-6},
  bibsource    = {dblp computer science bibliography, https://dblp.org}
}

@article{CEH07,
  author       = {Otfried Cheong and
                   Alon Efrat and
                   Sariel Har-Peled},
  title        = {{Finding a Guard that Sees Most and a Shop that Sells Most}},
  journal      = {Discret. Comput. Geom.},
  volume       = {37},
  number       = {4},
  pages        = {545--563},
  year         = {2007},
  doi          = {10.1007/S00454-007-1328-5},
  url          = {https://doi.org/10.1007/S00454-007-1328-5},
  bibsource    = {dblp computer science bibliography, https://dblp.org}
}

@article{AH08,
  author       = {Boris Aronov and
                   Sariel Har-Peled},
  title        = {{On Approximating the Depth and Related Problems}},
  journal      = {SIAM J. Comput.},
  volume       = {38},
  number       = {3},
  pages        = {899--921},
  year         = {2008},
  doi          = {10.1137/060669474},
  url          = {https://doi.org/10.1137/060669474},
  bibsource    = {dblp computer science bibliography, https://dblp.org}
}

@inproceedings{CH19,
  author       = {Timothy M. Chan and
                   Sariel Har-Peled},
  title        = {{Smallest $k$-Enclosing Rectangle Revisited}},
  booktitle    = {{35th International Symposium on Computational Geometry (SoCG 2019)}},
  series       = {LIPIcs},
  volume       = {129},
  publisher    = {Schloss Dagstuhl - Leibniz-Zentrum f{\"u}r Informatik},
  pages        = {23:1--23:15},
  year         = {2019},
  doi          = {10.4230/LIPICS.SOCG.2019.23},
  url          = {https://doi.org/10.4230/LIPICS.SOCG.2019.23},
  bibsource    = {dblp computer science bibliography, https://dblp.org}
}

@inproceedings{BN21,
  author       = {Karl Bringmann and
                   Andr{\'e} Nusser},
  title        = {{Translating Hausdorff Is Hard: Fine-Grained Lower Bounds for Hausdorff Distance Under Translation}},
  booktitle    = {{37th International Symposium on Computational Geometry (SoCG 2021)}},
  series       = {LIPIcs},
  volume       = {189},
  publisher    = {Schloss Dagstuhl - Leibniz-Zentrum f{\"u}r Informatik},
  pages        = {18:1--18:17},
  year         = {2021},
  doi          = {10.4230/LIPICS.SOCG.2021.18},
  url          = {https://doi.org/10.4230/LIPICS.SOCG.2021.18},
  bibsource    = {dblp computer science bibliography, https://dblp.org}
}

@inproceedings{AHI+01,
  author       = {Manuel Abellanas and
                   Ferran Hurtado and
                   Christian Icking and
                   Rolf Klein and
                   Elmar Langetepe and
                   Lihong Ma and
                   Bel{\'e}n Palop and
                   Vera Sacrist{\'a}n},
  title        = {{Smallest Color-Spanning Objects}},
  booktitle    = {{9th European Symposium on Algorithms (ESA 2001)}},
  series       = {Lecture Notes in Computer Science},
  volume       = {2161},
  publisher    = {Springer},
  pages        = {278--289},
  year         = {2001},
  doi          = {10.1007/3-540-44676-1_23},
  url          = {https://doi.org/10.1007/3-540-44676-1_23},
  bibsource    = {dblp computer science bibliography, https://dblp.org}
}

@article{AYZ95,
  author       = {Noga Alon and
                   Raphael Yuster and
                   Uri Zwick},
  title        = {{Color-Coding}},
  journal      = {J. ACM},
  volume       = {42},
  number       = {4},
  pages        = {844--856},
  year         = {1995},
  doi          = {10.1145/210332.210337},
  url          = {https://doi.org/10.1145/210332.210337},
  bibsource    = {dblp computer science bibliography, https://dblp.org}
}

@inproceedings{DFG+26,
  author       = {Bartlomiej Dudek and
                   Nick Fischer and
                   Geri Gokaj and
                   Ce Jin and
                   Marvin K{\"{u}}nnemann and
                   Xiao Mao and
                   Mirza Redzic},
  title        = {{Classifying Identities: Subcubic Distributivity Checking and Hardness from Arithmetic Progression Detection}},
  booktitle    = {{58th ACM Symposium on Theory of Computing (STOC 2026)}},
  publisher    = {{ACM}},
  pages        = {1453--1464},
  year         = {2026},
  doi          = {10.1145/3798129.3800854},
  url          = {https://doi.org/10.1145/3798129.3800854},
  bibsource    = {dblp computer science bibliography, https://dblp.org}
}

@article{AV26,
  author       = {Josh Alman and
                   Virginia Vassilevska Williams},
  title        = {{Truly Subquadratic 3SUM and Truly Subcubic APSP via Triangles in Sparse Lopsided Graphs}},
  journal      = {CoRR},
  volume       = {abs/2610.06783},
  year         = {2026},
  doi          = {10.48550/ARXIV.2610.06783},
}

@inproceedings{GK25,
  author       = {Geri Gokaj and
                   Marvin K{\"{u}}nnemann},
  title        = {{Completeness Theorems for $k$-SUM and Geometric Friends: Deciding Fragments of Linear Integer Arithmetic}},
  booktitle    = {{16th Innovations in Theoretical Computer Science Conference (ITCS 2025)}},
  series       = {LIPIcs},
  volume       = {325},
  publisher    = {Schloss Dagstuhl - Leibniz-Zentrum f{\"{u}}r Informatik},
  pages        = {55:1--55:25},
  year         = {2025},
  doi          = {10.4230/LIPIcs.ITCS.2025.55},
  url          = {https://doi.org/10.4230/LIPIcs.ITCS.2025.55},
  bibsource    = {dblp computer science bibliography, https://dblp.org}
}

@article{BFK+26,
  author       = {Daniel Bertschinger and
                   Henry F{\"{o}}rster and
                   Fabian Klute and
                   Irene Parada and
                   Patrick Schnider and
                   Birgit Vogtenhuber},
  title        = {{Bowties and Hourglasses: Intersections of Double-Wedges or: Stabbing and Avoiding Line Segments}},
  journal      = {Inf. Process. Lett.},
  volume       = {194},
  pages        = {106642},
  year         = {2026},
  doi          = {10.1016/j.ipl.2026.106642},
  url          = {https://doi.org/10.1016/j.ipl.2026.106642},
  bibsource    = {dblp computer science bibliography, https://dblp.org}
}

@inproceedings{LCA26,
  author       = {Jaegun Lee and
                   Chaeyoon Chung and
                   Hee-Kap Ahn},
  title        = {{Bichromatic Classifications of Points Using Strips}},
  booktitle    = {{20th Scandinavian Symposium on Algorithm Theory (SWAT 2026)}},
  series       = {LIPIcs},
  volume       = {370},
  publisher    = {Schloss Dagstuhl - Leibniz-Zentrum f{\"{u}}r Informatik},
  pages        = {29:1--29:17},
  year         = {2026},
  doi          = {10.4230/LIPIcs.SWAT.2026.29},
  url          = {https://doi.org/10.4230/LIPIcs.SWAT.2026.29},
  bibsource    = {dblp computer science bibliography, https://dblp.org}
}

@misc{OpenAI26,
  author       = {{OpenAI}},
  title        = {{Quasipolynomial Bounds for Arithmetic Progressions}},
  year         = {2026},
  howpublished = {Preprint available at \href{https://github.com/openai/math/blob/main/preprints/Quasipolynomial-Bounds-for-Arithmetic-Progressions-September-23-2026/paper.pdf}{github.com/openai/math}}
}

@inproceedings{Veg90,
  author       = {Gert Vegter},
  title        = {{The Visibility Diagram: a Data Structure for Visibility Problems and Motion Planning}},
  booktitle    = {{2nd Scandinavian Workshop on Algorithm Theory (SWAT 1990)}},
  series       = {Lecture Notes in Computer Science},
  volume       = {447},
  publisher    = {Springer},
  pages        = {97--110},
  year         = {1990},
  doi          = {10.1007/3-540-52846-6_81},
  url          = {https://doi.org/10.1007/3-540-52846-6_81},
  bibsource    = {dblp computer science bibliography, https://dblp.org}
}

@article{CK93,
  author       = {L. Paul Chew and
                   Klara Kedem},
  title        = {{A Convex Polygon Among Polygonal Obstacles: Placement and High-clearance Motion}},
  journal      = {Comput. Geom.},
  volume       = {3},
  pages        = {59--89},
  year         = {1993},
  doi          = {10.1016/0925-7721(93)90001-M},
  url          = {https://doi.org/10.1016/0925-7721(93)90001-M},
  bibsource    = {dblp computer science bibliography, https://dblp.org}
}

@article{HS96,
  author       = {Dan Halperin and
                   Micha Sharir},
  title        = {{A Near-Quadratic Algorithm for Planning the Motion of a Polygon in a Polygonal Environment}},
  journal      = {Discret. Comput. Geom.},
  volume       = {16},
  number       = {2},
  pages        = {121--134},
  year         = {1996},
  doi          = {10.1007/BF02716803},
  url          = {https://doi.org/10.1007/BF02716803},
  bibsource    = {dblp computer science bibliography, https://dblp.org}
}

@inproceedings{GKST25,
  author       = {Geri Gokaj and
                   Marvin K{\"{u}}nnemann and
                   Sabine Storandt and
                   Carina Truschel},
  title        = {{(Multivariate) $k$-SUM as Barrier to Succinct Computation}},
  booktitle    = {{33rd European Symposium on Algorithms (ESA 2025)}},
  series       = {LIPIcs},
  volume       = {351},
  publisher    = {Schloss Dagstuhl - Leibniz-Zentrum f{\"{u}}r Informatik},
  pages        = {42:1--42:19},
  year         = {2025},
  doi          = {10.4230/LIPIcs.ESA.2025.42},
  url          = {https://doi.org/10.4230/LIPIcs.ESA.2025.42},
  bibsource    = {dblp computer science bibliography, https://dblp.org}
}

@inproceedings{ABB+26,
  author       = {Mikkel Abrahamsen and
                   Sujoy Bhore and
                   Maike Buchin and
                   Jacobus Conradi and
                   Ce Jin and
                   Andr{\'{e}} Nusser and
                   Carolin Rehs},
  title        = {{Algorithms and Lower Bounds for the Maximum Overlap of Two Polygons Under Translation}},
  booktitle    = {{37th ACM-SIAM Symposium on Discrete Algorithms (SODA 2026)}},
  publisher    = {SIAM},
  pages        = {5624--5640},
  year         = {2026},
  doi          = {10.1137/1.9781611978971.200},
  url          = {https://doi.org/10.1137/1.9781611978971.200},
  bibsource    = {dblp computer science bibliography, https://dblp.org}
}

@inproceedings{KN22,
  author       = {Marvin K{\"{u}}nnemann and
                   Andr{\'{e}} Nusser},
  title        = {{Polygon Placement Revisited: (Degree of Freedom + 1)-SUM Hardness and an Improvement via Offline Dynamic Rectangle Union}},
  booktitle    = {{33rd ACM-SIAM Symposium on Discrete Algorithms (SODA 2022)}},
  publisher    = {SIAM},
  pages        = {3181--3201},
  year         = {2022},
  doi          = {10.1137/1.9781611977073.124},
  url          = {https://doi.org/10.1137/1.9781611977073.124},
  bibsource    = {dblp computer science bibliography, https://dblp.org}
}

\end{document}